\documentclass{article}
\usepackage{geometry}
 \geometry{
 a4paper,
 total={170mm,257mm},
 left=20mm,
 top=20mm,
 }

\usepackage{natbib}
\usepackage{times}
\usepackage{soul}
\usepackage{url}
\usepackage[hidelinks]{hyperref}
\usepackage[utf8]{inputenc}
\usepackage[small]{caption}
\usepackage{graphicx}
\usepackage{amsmath}
\usepackage{booktabs}
\usepackage{algorithm}
\usepackage{algorithmic}
\urlstyle{same}


\usepackage{xcolor}
\usepackage{amsthm}
\usepackage{amssymb}
\usepackage{physics}
\usepackage{cleveref}
\newtheorem{theorem}{Theorem}[section]

\newtheorem{lemma}[theorem]{Lemma}
\newtheorem{claim}[theorem]{Claim}
\newtheorem{proposition}[theorem]{Proposition}
\newtheorem{observation}[theorem]{Observation}
\newtheorem{definition}[theorem]{Definition}

\title{FaRM: Fair Reward Mechanism for Information Aggregation in Spontaneous Localized Settings \break
\Large (Extended Version)}

\author{
Moin Hussain Moti$^1$, Dimitris Chatzopoulos$^2$, Pan Hui$^{2,3}$, Sujit Gujar$^1$ \\
$^1$Machine Learning Lab, International Institute of Information Technology Hyderabad, India\\
$^2$The Hong Kong University of Science and Technology, Hong Kong \\
$^3$University of Helsinki, Finland\\
\texttt{\small moin.moti@research.iiit.ac.in,
\{dcab, panhui\}@cse.ust.hk,
sujit.gujar@iiit.ac.in}
}

\begin{document}

\maketitle

\begin{abstract}
 Although peer prediction markets are widely used in crowdsourcing to aggregate information from agents, they often fail to reward the participating agents equitably. Honest agents can be wrongly penalized if randomly paired with dishonest ones. In this work, we introduce \emph{selective} and \emph{cumulative} fairness. We characterize a mechanism as fair if it satisfies both notions and present FaRM, a representative mechanism we designed. FaRM is a Nash incentive mechanism that focuses on information aggregation for spontaneous local activities which are accessible to a limited number of agents without assuming any prior knowledge of the event. All the agents in the vicinity observe the same information. FaRM uses \textit{(i)} a \emph{report strength score} to remove the risk of random pairing with dishonest reporters, \textit{(ii)} a \emph{consistency score} to measure an agent's history of accurate reports and distinguish valuable reports, \textit{(iii)} a \emph{reliability score} to estimate the probability of an agent to collude with nearby agents and prevents agents from getting swayed, and \textit{(iv)} a \emph{location robustness score} to filter agents who try to participate without being present in the considered setting. Together, report strength, consistency, and reliability represent a fair reward given to agents based on their reports.
\end{abstract}

\section{Introduction}

Information aggregation systems have been employed to gather data for a wide variety of applications. A few examples are pollution measurements in cities, 
product reviews on e-commerce websites, feedback on mobile applications on application markets, working hours and other details of a business, etc. Depending on the type of the information of interest and the available time for reporting, the devices that can collect the data and share them with the information aggregation systems vary. The focus of this work is on location specific queries that need to be answered within a short time in the so-called \emph{spontaneous localized settings}. Such settings are challenging to handle since information regarding the \emph{query subjects} can only be collected by mobile agents that are located in an area and, possibly, by pre-deployed static sensors.  For example, consider a food deli serving various snacks to be the subject of interest and let the query be the availability of a certain snack. Since the food deli is only relevant to the neighborhood and the queries can be very specific, Internet services, like Neighbourly\footnote{\url{neighbourly.google.com/about}}, need to employ mobile agents. 

In such scenarios, crowdsensing mechanisms can employ agents to collect information with spatiotemporal characteristics. The main problem of aggregating information using crowdsensing mechanisms is to ensure that the agents are truthful. A dishonest agent, for example, may have some grudge against an agency and therefore reports their service as a lousy experience irrespective of the services of the agency.
Agents can be categorized as honest, free-riders or misbehaving. Honest agents are expected to put effort into assessing the query subject and reporting truthful information. Free-riders do not put effort into the assessment and try to earn free credits by guessing the information requested for the subject. Misbehaving agents try to trick the mechanism by colluding among themselves and reporting false information. Information aggregation mechanisms must be robust against free-riders and misbehaving agents. For that, they incorporate incentive mechanisms to engage agents to submit accurate reports. Rational agents target on maximizing their utility (payments, reputation, etc.) while not sacrificing a substantial amount of resources (battery lifetime, dataplan usage, etc). 

For traditional settings, there exist many popular peer prediction mechanisms inspired by the work of \cite{prelec2004bayesian}, who proposed the \emph{Bayesian Truth Serum} and \cite{miller2005eliciting}, who developed \emph{Peer-to-Peer Prediction}. These mechanisms rate an agent's report against the report of another randomly selected agent and if they match, both agents are rewarded based on the prior probability of their answer. Unfortunately, when considering spontaneous localized settings, it is probable to not have readily available prior knowledge about the query subject. Also, there is a chance of matching an honest agent with a misbehaving one, which results in an unfair evaluation. Although these mechanisms are Bayesian Incentive Compatible, they do not ensure \emph{fairness}. \cite{goel2019fair}, recently proposed a fair mechanism for information aggregation. However, they consider queries where the answers are known and hence, their mechanism is not applicable to spontaneous localized settings.

In this paper, we focus on information aggregation in spontaneous localized settings and propose a suitable mechanism. Spontaneous localized settings describe information aggregation requests in local environments about subjects which are accessible by only a small set of people. For example, the food deli scenario mentioned above, the queue length in a local coffee shop, and others. Any query in these settings refers to a query subject (QS) and is valid only for a bounded time. Also, there does not exist a prior distribution for the possible answers about QS. However, in these settings, the observation environment is assumed to be the same for all the agents in the vicinity of QS. This is similar to the output agreement (OA) setting where two agents observe the same signal and are scored based on the degree of similarity in their reports. OA setting was first introduced by \cite{von2004labeling,von2008designing} and later studied experimentally by \cite{robertson2009rethinking}, and \cite{huang2012systematic}. \cite{waggoner2014output} were the first to provide a theoretical analysis of the output agreement setting. We introduce spontaneous localized settings in more detail in \Cref{sec:local-settings}. 

We leverage on the characteristic of a common observation environment to propose the \emph{fair reward mechanism (FaRM)}, a mechanism that incentivizes agents to report honestly, marginalizes free-riders and misbehaving agents and ensures fairness in reward distribution without the need for a known prior distribution. FaRM is a novel Nash incentive compatible mechanism for information aggregation in spontaneous localized settings. Additionally, we introduce the notions of \emph{selective fairness} and \emph{cumulative fairness}.
A mechanism satisfies selective fairness in cases where for two different agents that have submitted the same reports, they are evaluated similarly. Thus, all honest agents receive the same reward if they submitted the same report. We propose \emph{report strength} as one component of the reward scheme integrated in FaRM to ensure selective fairness. Existing peer-prediction mechanisms consider every agent's report equally valuable. FaRM, in contrast to this assumption, considers more valuable the agents with long history of submitting truthful reports compared to agents with lousy history of reporting due to their behavioral tendency to maintain the streak. Cumulative fairness mandates that a fair mechanism should acknowledge an agent's history of submission and consider her consistency as part of final reward. We propose \emph{consistency score} as the second component of FaRM's reward scheme which defines an agent's consistency so far in the system and ensures cumulative fairness. This also helps FaRM minimize the noise caused by false reports by considering the value of a report based on its reporter's consistency. FaRM is designed to ensure selective fairness and cumulative fairness. 

Also, since spontaneous localized settings focus information aggregation related to local query subjects, many participating agents will be located around the vicinity of the subject. Hence, it becomes imperative that the mechanism needs to be collusion resistant. To ensure this, we propose the \emph{reliability score} to estimate agents probability of collusion with nearby agents. This score is designed to prevent the agents from colluding with nearby agents since it directly affects their final reward. In summary, our contributions are:
\begin{enumerate}
	\item FaRM is Nash incentive compatible mechanism that ensures truthful reporting from agents and guarantees non-negative rewards to all agents (\Cref{theorem:NIC}).
	\item FaRM incorporates a payment scheme composed of: \emph{report strength}, \emph{consistency score} and \emph{reliability score}. 
	\item FaRM uses report strength to ensure selective fairness by distributing the same immediate reward to all the agents with the same report (\Cref{sec:Phi}). 
	\item FaRM employs consistency score as an estimate of agent's consistency in the system so far. It gets updated after each submission. It ensures cumulative fairness by rewarding the agent for her streak of honest submissions to the system (\Cref{sec:alpha}). 
	\item FaRM integrates reliability score as an estimate of agent's collusion with nearby agents (\Cref{sec:beta}). 
	\item FaRM filters reports from agents who are not close to the query subject by incorporating a \emph{location robustness score} (\Cref{sec:gamma}). 
\end{enumerate}

\section{Related Work} \label{sec:related-works}


Information markets (IM), also known as prediction markets, are exchange-traded markets where agents with different beliefs trade on the outcome of events. The market prices act as an estimate of the probability of the occurring of an event. Therefore, prediction markets can be used to aggregate beliefs. Agents trade assets each of whose value is tied to a particular observation. We discuss several proposals in information aggregation related to the design of FaRM.

In their seminal work, \cite{miller2005eliciting} use an IM setting to elicit information from the agents. The asset, in this case, was agents' reports but the agents traded the reports with the system instead of other agents. This centralized system is an automated market maker which ensures the availability of the market for trade. Agents are rewarded using \emph{strictly proper scoring rules}, which ascertain that the best response for each agent is to report truthfully. However, their mechanism relies on the \emph{common knowledge assumption}, which is not applicable in localized settings. \cite{prelec2004bayesian} propose a mechanism which does not require knowledge of common prior information but is applicable on large number of agents. \cite{lambert2008truthful} also propose a mechanism does not require common prior information but in the equilibrium, agents are indifferent between being honest and misreporting.

\cite{witkowski2011peer,witkowski2012peer} present mechanisms that do not assume any common prior information, do not need a large number of agents to function and are robust to private beliefs of agents but suffer from temporal separation of reports. Their improved mechanism, named Robust Bayesian Truth Serum (RBTS)~(\cite{witkowski2012robust}), does not require temporal separation between two reports but works only for binary outcomes. 
\cite{radanovic2013robust,radanovic2014incentives} improve RBTS to make it compatible with non-binary and continuous outcomes.
Similarly, \cite{zhang2014elicitability} propose a mechanism which does not assume prior known information. \cite{riley2014minimum} presents a mechanism which is minimalistic (requires only one report) under the assumption that all the agents with the same outcome have the same posterior expectations. A common characteristic among the aforementioned mechanisms is the use of \emph{Strictly Proper Scoring Rules} (SPSR) for distributing agents' rewards. SPSR, however, do not guarantee fairness. 
The incentive compatible mechanism by \cite{jurca2003incentive} is not employing a SPSR and is designed for interactive reputation markets. The weakness of this mechanism is that it is susceptible to collusion and does not consider agents' private beliefs. A new Robust version by \cite{jurca2007robust} of the earlier mechanism makes it robust against private beliefs of agents to some extent but it assumes that common prior is known to the mechanism. \cite{dasgupta2013crowdsourced} also constructed a strong incentive compatible mechanism but it can only be applied to binary settings. 
\cite{faltings2014incentive} propose \emph{Peer Truth Serum} (PTS) which is a minimalistic mechanism but assumes a prior belief model. \cite{radanovic2015incentive} improved PTS by eliminating the dependency on prior belief model, they called it \emph{Logarithmic Peer Truth Serum} (LPTS). LPTS assumes that all agents observe same signal and therefore were able to come up with a minimalistic mechanism. Their mechanism is able to tackle colluders and focuses on effort elicitation to prevent freeloading by agents. \cite{radanovic2016incentives} later on propose an optimized version of PTS focusing on in multi-task~settings.

FaRM, similarly to LPTS, assumes that all agents observe the same signal and does not require any common knowledge assumptions. It is a minimalistic and collusion resistant mechanism that prevents free-riding and guarantees non-negative rewards to the users. 



\section{Spontaneous Localized Settings} \label{sec:local-settings}
Existing mechanisms for information aggregation have been designed for the two following settings: \textit{(i)} online reviews about products and services, and \textit{(ii)} community sensing regarding prevalent topics like pollution level reporting over a wide region for a considerable duration of time.
Additionally, to the best of our knowledge, existing literature on information markets has not discussed information aggregation of location-specific spontaneous tasks on local subjects.
Most papers study an \emph{independent value} model where the agents can independently observe varying signals. In this model, the truthful report of agents does not need to be the same. However, spontaneous localized settings follow a \emph{common value} model where all agents observe the same signal.


More formally, considering a query subject $QS$, a set of nearby agents $\mathcal{A}$ that choose to participate and asses $QS$, and a budget $B \in \mathbb{R}$, we want to estimate a function $f$. For example, $QS$ can be a food deli and $f$ the number of eateries available for purchase.
$S$ denotes the signal space for the query and is allowed to be non-binary. Every query in spontaneous localized settings focuses on only one task. The observation area is considered to be in the close vicinity of the $QS$. Since the query requires to be answered spontaneously, we assume the $QS$ to not change its status for the time period the query remains active.
Every agent $i \in \mathcal{A}$ observes the same signal $s \in S$ and reports a signal $r_i \in S$. $r_i$ can be different to $s$.
We define a general utility function for agent $i$ as $\hat{u}_i(r_i, r_{-i})$ where $r_i$ represents agent $i$'s reported signal and $r_{-i}$ ($= (r_1,...,r_{i-1},r_{i+1},...,r_{|\mathcal{A}|})$) represents all other agents' reported signal. Note that we assume $\abs{\mathcal{A}} \geq 3$, i.e. at least 3 agents are participating for a particular query. Also, the sum of total utility ($u_i$) of all agents should not exceed the budget of the query i.e. $\sum_{u_i} \leq B$.


\begin{table}[t]
	\normalsize
	\begin{center}
		\begin{tabular}{ p{1.3cm}  p{5.7cm} | p{1.3cm}  p{5.7cm}}
			\toprule
			\textbf{Symbol} & \textbf{Description} &  \textbf{Symbol} & \textbf{Description}                                   \\
			\cmidrule(r){1-4}
			$QS$           	& Query Subject                                       	   &
			$\mathcal{A}$   & Set of agents participating in the process.              \\
			$\mathcal{S}$   & Set of all the signals.                                  &
			$s$             & Observed signal by all agents.                           \\
			$r_i$           & Reported signal by agent $i$.                            &
			$u_i$           & Reward of agent $i$.                                     \\
			$B$             & Budget per crowdsensing query.                           &
			$l_i$           & Context of a agent.                                      \\
			$\alpha_i$      & Consistency score of agent $i$.                          &
			$\beta_i$       & Reliability score of agent $i$.                          \\
			$\gamma_i$      & Location robustness score of agent $i$.                  &
			$\mathcal{I}_i$ & Nearby agents (internal peers) of agent $i$.             \\
			$\mathcal{E}_i$ & External peers of agent $i$.                             &
			$\Phi()$        & Strength function for a report type.                     \\
			$\varphi_1$     & Strength of strongest report.                            &
			$\varphi_2$     & Strength of second strongest report.                     \\
			\bottomrule
		\end{tabular}
		\caption{Notation Table.}
		\label{tab:params}
	\end{center}
\end{table}

\section{FaRM: Fair Reward Mechanism}\label{sec:FaRM}
The payment scheme designed for FaRM is the product of three sub-utility functions, namely \emph{report strength}, \emph{consistency score} and \emph{reliability score}.
The \emph{report strength} is calculated at the end of every report collection process based on the submitted reports.
The \emph{consistency score} is updated for each agent after every participation. For a new agent $i$, her consistency score, $(\alpha_i)$, is initialized to $0$.
FaRM also computes a \emph{reliability score} for each agent which is an estimate of an agent's collusion with nearby agents. 
The consistency score motivates agents to participate and report truthfully while the reliability score prevents agents from colluding with her neighboring agents. We discuss these metrics in \Cref{sec:Phi}, \Cref{sec:alpha} and \Cref{sec:beta} respectively where we analyze and prove that truthful reporting forms pure strategy Nash equilibrium (PSNE) for each metric.



\begin{definition}[Pure Strategy Nash Equilibrium]
	Given a strategic form game $\Gamma = \langle \mathcal{A}, (\Pi_i): S \rightarrow S, (\hat{u}_i)\rangle$, the strategy profile $\pi^* = (\pi^*_1, \pi^*_2,..., \pi^*_n)$ is called a PSNE of $\Gamma$ if
	$$\pi^*_i = argmax_{\pi_i \in \Pi_i} \hat{u}_i(\pi_i, \pi^*_{-i}) ~ \forall i \in N$$
	That is, each player's Nash equilibrium strategy is a best response to the Nash equilibrium strategies of the other players~\cite{narahari2014game}.
\end{definition}

\begin{definition}[Nash Incentive Compatibility]
	A social choice function $u: S_1 \times ... \times S_n$ is said to be Nash Incentive Compatible (or truthfully implementable in Nash equilibrium) if the direct revelation mechanism $\mathcal{D} = ((S_i)_{i \in \mathcal{A}}, \hat{u}(.))$ has a pure strategy Nash equilibrium $\pi^{*}(.) = (\pi^{*}_1(.), ..., \pi^{*}_n(.))$ in which $\pi^{*}_i(s_i) = s_i \forall s_i \in S_i, \forall i \in \mathcal{A}$.

	That is, directly asking the agents to report their types and using this information in $\hat{u}(.)$ to get the social outcome will solve both the problems, namely, preference elicitation and preference aggregation \cite{narahari2014game} \cite{garg2008foundations} \cite{garg2008foundations-Pt2}.
\end{definition}

\subsection{Report Strength} \label{sec:Phi}
We define the strength of a report as the count of agents who have reported the same signal as agent $i$. If there is a wide spectrum of possible signals, we discretize the range in buckets and consider two signals as equal, if they belong in the same bucket. Report strength is the first of the three scores we compute for a report. It only depends on the current performance of the agent and hence can be considered as the immediate reward for an agent in a particular query.

\subsubsection{Computation of Report Strength} 
Every $s \in S$ represents a report type and each report type receives certain number of reports. Let $\Phi$ be a function that counts the number of reports of the same type. $\Phi_{r_i}$ is defined as the strength of agent $i$'s report by measuring how many agents are in agreement with agent $i$.
\begin{eqnarray}
	\Phi(r_i) = \sum_{j \in \mathcal{A}}1_{r_i = r_j}
\end{eqnarray}
$\Phi(\cdot)$ is a function on a generic report type and is independent of the agent.
Thus, for computational efficiency, strength score should be pre-computed for every signal which can be done by maintaining a counter for every signal while iterating over all agent reports. That is for $n$ agents, total computational complexity is $O(n)$. Individual strength score for any agent can then be obtained by just referring to the strength score of the corresponding signal reported. 

\begin{observation} \label{observation:Phi-positive}
	The report strength of the report of an agent is always positive.
\end{observation}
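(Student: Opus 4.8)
The plan is to observe that the defining sum for $\Phi(r_i)$ always contains at least one term equal to $1$, namely the term contributed by agent $i$ herself. First I would recall the definition $\Phi(r_i) = \sum_{j \in \mathcal{A}} 1_{r_i = r_j}$ and note that each summand is an indicator, hence lies in $\{0,1\}$, so $\Phi(r_i)$ is a sum of nonnegative integers and is therefore nonnegative. Next I would isolate the summand with index $j = i$: since $r_i = r_i$ holds trivially, we have $1_{r_i = r_j}\big|_{j=i} = 1$.

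From there the bound is immediate: because agent $i \in \mathcal{A}$ (the query's participating set, which by the standing assumption has $\abs{\mathcal{A}} \ge 3$, so in particular contains $i$), the index $i$ appears in the range of summation, and
\begin{equation}
	\Phi(r_i) = \sum_{j \in \mathcal{A}} 1_{r_i = r_j} \;\ge\; 1_{r_i = r_i} \;=\; 1 \;>\; 0.
\end{equation}
Hence $\Phi(r_i) \ge 1$ for every agent $i$, which in particular gives strict positivity.

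There is no real obstacle here; the only thing to be careful about is the modeling convention — namely that the sum over $\mathcal{A}$ genuinely includes the term $j = i$ rather than ranging over $\mathcal{A} \setminus \{i\}$. I would state this explicitly so the one-line argument is airtight. (If the paper's convention were instead to exclude $i$, the statement would need the additional mild hypothesis that at least two agents report the same signal, but as written the self-term is included and the bound $\Phi(r_i)\ge 1$ is unconditional.) This positivity is exactly what later sections need so that $\Phi(\cdot)$ can serve as a multiplicative factor in the reward without collapsing it to zero.
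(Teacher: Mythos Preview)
Your proposal is correct and matches the paper's own argument essentially verbatim: both observe that the defining sum $\Phi(r_i)=\sum_{j\in\mathcal{A}}1_{r_i=r_j}$ includes the self-term $j=i$, which contributes $1$ and hence forces $\Phi(r_i)\ge 1$. The paper's proof is just the one-line version of what you wrote, so there is nothing to add.
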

\begin{proof}
	For any agent $i$, $\Phi(r_i)$ compares $r_i$ with all the reports including $r_i$ itself, hence ensuring that the report strength is at least 1.
	$$ i \in \mathcal{A} \Rightarrow \Phi(r_i) \geq 1 $$
\end{proof}

\begin{lemma} \label{lemma:Phi}
	Consider a game induced by $u^\Phi$ where $u^\Phi = \Phi$ is the sub-utility function corresponding to the report strength ($\Phi$); and let $\hat{s}$ be the observed signal by all agents and $r_i$ be the report submitted by any agent $i$.
	Then under Nash equilibrium,
	$$
		u^{\Phi}_i(r_i = \hat{s}, r_{-i} = \hat{s}) > u^\Phi_i(r_i \ne \hat{s}, r_{-i} = \hat{s}) \forall r_i \in S, \forall i \in \mathcal{A}
	$$
	That is, if all other agents were to report truthfully, the best response for agent $i$ in order to maximize her sub-utility $u^\Phi_i$ is also to report truthfully.
\end{lemma}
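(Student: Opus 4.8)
The plan is to reason directly about the structure of the report-strength function $\Phi$ under the hypothesis that every other agent reports the true signal $\hat{s}$. Fix an agent $i$ and suppose $r_{-i} = \hat{s}$, i.e.\ all $|\mathcal{A}| - 1$ other agents submit $\hat{s}$. I will compare the two cases for agent $i$'s own report: reporting truthfully ($r_i = \hat{s}$) versus deviating to some $r_i \neq \hat{s}$.

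First I would compute $u^\Phi_i$ in the truthful case. If $r_i = \hat{s}$, then the indicator $1_{r_i = r_j}$ is $1$ for agent $i$ herself and for all other $|\mathcal{A}| - 1$ agents (since they all report $\hat{s}$), so $\Phi(r_i) = |\mathcal{A}|$. Next I would compute it in the deviating case: if $r_i \neq \hat{s}$, then among the other agents none matches $r_i$ (they all report $\hat{s} \neq r_i$), and only agent $i$'s own report matches itself, so $\Phi(r_i) = 1$. The comparison is then immediate: $|\mathcal{A}| > 1$ because we assumed $|\mathcal{A}| \geq 3$ in the model, hence $u^\Phi_i(r_i = \hat{s}, r_{-i} = \hat{s}) = |\mathcal{A}| > 1 = u^\Phi_i(r_i \neq \hat{s}, r_{-i} = \hat{s})$, which is exactly the claimed strict inequality.

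Since $r_i = \hat{s}$ is the unique maximizer of agent $i$'s sub-utility against the profile $r_{-i} = \hat{s}$ — it attains $|\mathcal{A}|$ while every other report attains exactly $1$ — truthful reporting by all agents is a pure strategy Nash equilibrium of the game induced by $u^\Phi$, and in fact the deviation incentive is strict.

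I do not anticipate a serious obstacle here; the argument is essentially a two-line case analysis once the counting is written out carefully. The only point requiring a little care is the implicit bucketing convention: when the signal space is discretized into buckets, "$r_i \neq \hat{s}$" must be read as "$r_i$ lies in a different bucket than $\hat{s}$," and the same indicator-based counting goes through verbatim. If one wanted a marginally stronger statement, one could note the argument does not even need the full strength $|\mathcal{A}| \geq 3$ — $|\mathcal{A}| \geq 2$ suffices for strictness — but stating it as in the lemma is fine.
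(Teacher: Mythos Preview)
Your proposal is correct and follows essentially the same approach as the paper: both compute $\Phi(r_i)$ under the assumption $r_{-i}=\hat{s}$, obtaining $|\mathcal{A}|$ when $r_i=\hat{s}$ and $1$ otherwise, and conclude via $|\mathcal{A}|>1$. Your added remarks on bucketing and the sufficiency of $|\mathcal{A}|\geq 2$ are sound extras but do not change the argument.
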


\begin{proof}
	Let $\hat{s}$ be the observed signal by all agents and let $r_i$ be the reported signal of agent $i$. As per Nash equilibrium, assuming every other agent reports honestly i.e. $r_{-i} = \hat{s}$ , the strength of agent $i$'s report is given by,
	\begin{eqnarray*}
		\Phi(r_i) = 1 + (\abs{\mathcal{A}}-1)\times 1_{r_i = s}
	\end{eqnarray*}
	Consequently, sub-utility function $u^{\Phi}$ can be written as,
	\begin{eqnarray*}
		u^{\Phi}_i(r_i \neq \hat{s}, r_{-i} = \hat{s}) &=& 1 \\
		u^{\Phi}_i(r_i = \hat{s}, r_{-i} = \hat{s}) &=& \abs{\mathcal{A}}
	\end{eqnarray*}
	Hence, truthful reporting i.e. $r_i = \hat{s}$ is the best response for agent to maximize $u^\Phi_i$ when all other agents are also reporting truthfully.
\end{proof}

\subsubsection{Selective Fairness}
Information aggregation involves evaluating reported signals from agents. It is difficult to evaluate an agent since there is no ground truth available in most scenarios. Existing peer prediction based mechanisms discussed in \Cref{sec:related-works} rate an agent's report against the report of a random peer agent $p_i$.
This, however, exposes agent $i$ to unfair evaluation because if the peer agent $p_i$ is dishonest or misbehaving and submits false report, it will reflect in worse reward for agent $i$ even if she submits truthful report.
\cite{dwork2012fairness} state that an algorithm is fair if it generates similar results for agents with similar attributes. This is not possible with peer prediction based mechanisms.

Existing mechanisms in the literature mainly discuss settings which follow an \emph{independent-value model} where agents observe varying signals for the same query. This makes it difficult to rate all agents against equivalent reports.
In our case, we follow a \emph{common-value model} where all agents observe the same signal. This makes it easier to rate all agents against the same outcome.

\begin{definition}[Selective Fairness]
	Let $\mathcal{A}$ be a set of agents in the vicinity of subject $QS$ and agents $i, j  \in \mathcal{A}$ be any two arbitrary agents who submit two identical reports $r_i$ and $r_j$ such that $r_i = r_j$. The utility function of a payment scheme $\tilde{u}$, admits \emph{selective fairness} if,
	$$ \tilde{u}_i(r_i, r_{-i}) = \tilde{u}_{j}(r_j, r_{-j}), \forall i,j \in \mathcal{A} $$
\end{definition}

\begin{claim} \label{claim:Phi-fair}
	Report strength is a selectively fair computation score for agents.
\end{claim}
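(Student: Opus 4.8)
The plan is to unwind the definition of selective fairness and show that the report-strength sub-utility $u^\Phi$ assigns equal value to any two agents who file the same report. The whole argument rests on the fact, already emphasized right after the definition of $\Phi$, that $\Phi(\cdot)$ is a function of a report \emph{type} and not of an agent: its only argument is the reported signal value, and the report profile $(r_1,\dots,r_{|\mathcal{A}|})$ enters the computation only through the multiset of reported values.

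Concretely, I would fix two arbitrary agents $i,j \in \mathcal{A}$ with $r_i = r_j$, say both equal to some signal $v \in S$ (where, if the signal range has been discretized, "equal" means "in the same bucket"). Expanding the definition, $\Phi(r_i) = \sum_{k \in \mathcal{A}} 1_{r_i = r_k}$ counts exactly those agents whose report equals $v$, and $\Phi(r_j) = \sum_{k \in \mathcal{A}} 1_{r_j = r_k}$ counts exactly those agents whose report equals $v$ as well. The two sums range over the same index set $\mathcal{A}$ and have identical summands because $r_i = r_j = v$, so $\Phi(r_i) = \Phi(r_j)$. Since the report-strength sub-utility is defined by $u^\Phi_i = \Phi(r_i)$ and $u^\Phi_j = \Phi(r_j)$, we get $u^\Phi_i(r_i, r_{-i}) = u^\Phi_j(r_j, r_{-j})$, which is precisely the selective-fairness condition with $u^\Phi$ in the role of $\tilde{u}$.

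The only point that needs care — and it is bookkeeping rather than a genuine obstacle — is to make explicit that $\Phi$ never inspects the identity of the agent who submitted a given report, so the count attached to a value is shared by every agent who reported that value. Once this is spelled out, the claim is immediate, and it holds for every report profile, not only at the truthful equilibrium analyzed in \Cref{lemma:Phi}.
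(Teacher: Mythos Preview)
Your proposal is correct and follows essentially the same approach as the paper: both arguments observe that $\Phi(\cdot)$ depends only on the reported signal value (not on the reporting agent's identity), so agents with identical reports receive identical strength scores by construction. Your version is a bit more explicit in writing out the two sums and comparing them termwise, but the underlying idea is the same.
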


\begin{proof}
	Strength $\Phi(.)$ is a function of a generic report type (i.e. signal) and for a generic signal $s$, it is given as,
	\begin{eqnarray}
		\Phi(s) = \sum_{j \in \mathcal{A}} 1_{s = r_j}
	\end{eqnarray}
	Hence, by construction, agents with similar report will share similar strength score and so $\Phi$ is selectively fair.
\end{proof}

\subsection{Consistency Score (\texorpdfstring{$\alpha$}{alpha})} \label{sec:alpha}
The report of an agent with a history of reporting truthfully is more valuable to FaRM than the report of an agent with a lousy history of reporting. Therefore, instead of distributing rewards based only on the performance of agents in the last query, we keep a metric of every agent's performance so far in the mechanism and take it into consideration while paying the agents. We call this metric \emph{consistency score}.


Consistency score depends on the strength of agent $i$'s report. After every report of agent $i$, the FaRM updates agent's consistency score. The score increases for accurate reporting and gets penalized for inaccurate reporting. It should be noted, however, that the accuracy is relative since the ground truth is not known and the highest reported signal is used as a proxy for ground truth.
Consistent agents receive higher rewards than less consistent ones for the same report submitted to the system because of the higher value of their report. We, therefore, motivate agents to maintain the quality of their reports in order to maintain high rewards. This also encourages agents who have a higher rate of submitting truthful reports to the system to continue participating and at the same time discourages free-riders and dishonest reporters.

\subsubsection{Computation of Consistency Score} 
The consistency score is incremented with respect to the second highest reported signal (denoted by $\phi_2$) and is decremented based on the highest reported signal (denoted by $\phi_1$).
\begin{eqnarray}
	\varphi_1 &=& \max_{s \in S}(\Phi(s)) \\
	\varphi_2 &=& \left\{ \begin{array}{ccl}
		\max2_{s \in S}(\Phi(s))         & \mbox{if} & \max2_{s \in S}(\Phi(s)) > 0 \\\\
		\frac{\varphi_1^2 - 1}{\varphi_1} & \mbox{if} & \max2_{s \in S}(\Phi(s)) = 0
	\end{array}
	\right.
\end{eqnarray}
where $\max2$ gives the second highest value and $\varphi_1 \geq \varphi_2 > 0$.
$\varphi_1$ and $\varphi_2$ can be pre-computed before computing individual agent scores. For $n$ agents, the complexity is $O(n)$. Only agents with the strongest reports will have their consistency score incremented. However, it is not enough to submit one of the strongest reports, the report type must hold percentage strictly greater than other report types. Other agents will have their consistency score decremented. It is updated as follows,
\begin{equation}
	\alpha^t_i = \left\{
	\begin{array}{ccl}
		\alpha^{t-1}_i - \frac{\alpha^{t-1}_i}{k}\times\frac{(\varphi_1 - \Phi(r_i))}{\abs{\mathcal{A}}}   & \mbox{if} & \Phi(r_i) < \varphi_1 \\\\
		\alpha^{t-1}_i + \frac{1-\alpha^{t-1}_i}{k}\times\frac{(\varphi_1 - \varphi_2)}{\abs{\mathcal{A}}} & \mbox{if} & \Phi(r_i) = \varphi_1,
	\end{array}
	\right.
\end{equation}
where $k \geq 1$. Note that $k$ is just an arbitrary constant which can be used to tweak the reward rate. $t$ represents the sequence of rounds agent $i$ has participated. Once $\varphi_1$ and $\varphi_2$ are pre-computed, updating $\alpha$ has constant computational complexity per agent (i.e., the overall complexity is $O(n)$).

\begin{claim} \label{claim:alpha-bounded}
	Consistency score is bounded in the range [0,1).
\end{claim}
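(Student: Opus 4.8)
The plan is to prove the claim by induction on the round index $t$, maintaining the invariant $0 \le \alpha_i^t < 1$ through both branches of the update rule. The base case is immediate, since a fresh agent has $\alpha_i^0 = 0 \in [0,1)$. For the inductive step I assume $\alpha_i^{t-1} \in [0,1)$ and case-split on which branch fires. Throughout I would use two auxiliary facts: $\Phi(r_i) \ge 1$ from \Cref{observation:Phi-positive}, and $\varphi_1 = \max_{s \in S}\Phi(s) \le \abs{\mathcal{A}}$, which holds because $\Phi(s) = \sum_{j \in \mathcal{A}} 1_{s = r_j}$ counts a subset of the agents.

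For the penalty branch ($\Phi(r_i) < \varphi_1$), I would factor the update as $\alpha_i^t = \alpha_i^{t-1}\bigl(1 - \tfrac{1}{k}\cdot\tfrac{\varphi_1 - \Phi(r_i)}{\abs{\mathcal{A}}}\bigr)$ and show the bracketed multiplier lies in $(0,1)$. The bound $<1$ is clear since the subtracted term is strictly positive in this branch. For the bound $>0$ I need $\tfrac{1}{k}\cdot\tfrac{\varphi_1-\Phi(r_i)}{\abs{\mathcal{A}}} < 1$, which follows from $k \ge 1$ together with $\varphi_1 - \Phi(r_i) \le \varphi_1 - 1 \le \abs{\mathcal{A}} - 1 < \abs{\mathcal{A}}$. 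Multiplying the inductive hypothesis $\alpha_i^{t-1} \in [0,1)$ by a factor in $(0,1)$ keeps the result in $[0,1)$ (and if $\alpha_i^{t-1} = 0$ the result is $0$).

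For the reward branch ($\Phi(r_i) = \varphi_1$), I would rewrite the update as a convex combination $\alpha_i^t = (1-\lambda)\,\alpha_i^{t-1} + \lambda$ with $\lambda := \tfrac{1}{k}\cdot\tfrac{\varphi_1-\varphi_2}{\abs{\mathcal{A}}}$, and show $\lambda \in [0,1)$. Nonnegativity uses $\varphi_1 \ge \varphi_2 > 0$. For $\lambda < 1$ I would check the two cases in the definition of $\varphi_2$: if $\max2_{s\in S}\Phi(s) > 0$ then $\varphi_2 \ge 1$, so $\varphi_1 - \varphi_2 \le \varphi_1 - 1 < \abs{\mathcal{A}}$; if $\max2_{s\in S}\Phi(s) = 0$ then all agents reported the same signal, hence $\varphi_1 = \abs{\mathcal{A}}$ and $\varphi_1 - \varphi_2 = \varphi_1 - \tfrac{\varphi_1^2-1}{\varphi_1} = \tfrac{1}{\varphi_1} = \tfrac{1}{\abs{\mathcal{A}}} < \abs{\mathcal{A}}$ since $\abs{\mathcal{A}} \ge 3$. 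In both cases $\tfrac{\varphi_1-\varphi_2}{\abs{\mathcal{A}}} < 1$, and $k \ge 1$ gives $\lambda < 1$. Then $\alpha_i^{t-1} \in [0,1)$ and $\lambda \in [0,1)$ yield $\alpha_i^t \ge \lambda \ge 0$ and $\alpha_i^t < (1-\lambda)\cdot 1 + \lambda = 1$, closing the induction.

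I do not expect a genuine obstacle — the statement is essentially bookkeeping — but the step deserving the most care is the reward branch, specifically verifying $\varphi_1 - \varphi_2 < \abs{\mathcal{A}}$ in the degenerate case $\max2_{s\in S}\Phi(s) = 0$, where $\varphi_2$ is given by the ad hoc formula $\tfrac{\varphi_1^2-1}{\varphi_1}$: one must observe that this case forces unanimous reporting, so $\varphi_1 = \abs{\mathcal{A}}$, after which the bound is trivial. A secondary point worth stating explicitly is the inequality $\varphi_1 \le \abs{\mathcal{A}}$, since both branches rely on it.
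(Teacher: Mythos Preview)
Your proposal is correct and follows essentially the same induction-and-case-split argument as the paper: factor the penalty update as $\alpha_i^{t-1}$ times a coefficient in $(0,1)$, and rewrite the reward update as a convex combination with weight in $[0,1)$. You are in fact more careful than the paper in explicitly handling the degenerate case $\max2_{s\in S}\Phi(s)=0$ for $\varphi_2$, which the paper glosses over by simply asserting $0<\varphi_2\le\varphi_1\le\abs{\mathcal{A}}$.
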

\begin{proof}
	Consistency score for any agent is initialized at 0. The score remains 0 until the agent report coincides with the highest reported signal for the first time.
	Let
	\begin{align}
		\alpha^{t-1}_i \in [0,1) \label{eq:alpha-bounded-1}
	\end{align}
	\begin{description}
		\item[Case 1: $\Phi(r_i) < \varphi_1$]
			\begin{align}
				\alpha^t_i &= \alpha^{t-1}_i (1 - \frac{\varphi_1 - \Phi(r_i)}{\abs{\mathcal{A}}}) \label{eq:alpha-bounded-2}
			\end{align}
			Also, we know the following,
			\begin{align}
				\Phi(r_i) < \varphi_1 &< 	\abs{\mathcal{A}} \label{eq:alpha-bounded-3} \\
				k 					  &\geq 1 \label{eq:alpha-bounded-4} \\
			\end{align}
			Using \Cref{eq:alpha-bounded-1,eq:alpha-bounded-3,eq:alpha-bounded-4} in \Cref{eq:alpha-bounded-2}, we get,
			\begin{align}
				\alpha^t_i &\geq  0 \label{eq:alpha-bounded-final1} \\
				\alpha^t_i < \alpha^{t-1}_i &< 1 \label{eq:alpha-bounded-final2}
			\end{align}
		\item[Case 2: $\Phi(r_i) = \varphi_1$]
			\begin{align}
				\alpha^t_i = \alpha^{t-1}_i + \frac{1-\alpha^{t-1}_i}{k}\times\frac{(\varphi_1 - \varphi_2)}{\abs{\mathcal{A}}} \label{eq:alpha-bounded-5}
			\end{align}
			Let $a = \frac{\varphi_1 - \varphi_2}{k\abs{\mathcal{A}}}$ and we already know the following,
			\begin{align}
				0 < \varphi_2 \leq \varphi_1 &\leq \abs{\mathcal{A}} \label{eq:alpha-bounded-6} \\
				k 					  &\geq 1 \label{eq:alpha-bounded-7}
			\end{align}
			From \Cref{eq:alpha-bounded-6,eq:alpha-bounded-7}, we can say
			\begin{align}
				a \in [0,1) \label{eq:alpha-bounded-8}
			\end{align}
			We now rewrite \Cref{eq:alpha-bounded-5} as,
			\begin{align}
				\alpha^t_i = \alpha^{t-1}_i + (1 - \alpha^{t-1}_i)\cdot a \label{eq:alpha-bounded-9}
			\end{align}
			Using \Cref{eq:alpha-bounded-8,eq:alpha-bounded-1} in \Cref{eq:alpha-bounded-8}, we can say,
			\begin{align}
				\alpha^t_i \geq \alpha^{t-1}_i &\geq 0 \label{eq:alpha-bounded-final3} \\
				\alpha^t_i 					&< 1 \label{eq:alpha-bounded-final4}
			\end{align}
			From \Cref{eq:alpha-bounded-final1,eq:alpha-bounded-final2,eq:alpha-bounded-final3,eq:alpha-bounded-final4}, we prove our claim that $\alpha(\cdot) \in [0,1)$
				
	\end{description}
\end{proof}

Notice that the consistency score increases rapidly when it is closer to zero and it decreases fast when it is closer to one. This property motivates new agents to join the system and prevent existing agents from downgrading the quality of their reports.

\begin{lemma} \label{lemma:alpha}
	Consider a game induced by $u^\alpha$ where $u^\alpha = \alpha$ is the sub-utility function corresponding to consistency score ($\alpha$); and let $\hat{s}$ be the observed signal by all agents and $r_i$ be the report submitted by any agent $i$.
	Then considering the property of Nash equilibrium,
	$$
		u^{\alpha}_i(r_i = \hat{s}, r_{-i} = \hat{s}) > u^\alpha_i(r_i \ne \hat{s}, r_{-i} = \hat{s}) \forall r_i \in S, \forall i \in \mathcal{A}
	$$
	That is, if all other agents were to report truthfully, the best response for agent $i$ in order to maximize her sub-utility $u^\alpha_i$ is also to report truthfully.
\end{lemma}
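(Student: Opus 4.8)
The plan is to evaluate the post-round consistency score $\alpha^t_i$ in the two scenarios named in the lemma, under the Nash hypothesis $r_{-i} = \hat{s}$, and compare them directly. Write $n = \abs{\mathcal{A}}$ and let $\alpha := \alpha^{t-1}_i$ be agent $i$'s pre-round score; by \Cref{claim:alpha-bounded} we have $\alpha \in [0,1)$, a fact that will be used crucially at the very end.

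First I would handle the truthful report. If $r_i = \hat{s}$, then all $n$ agents report $\hat{s}$, so $\Phi(\hat{s}) = n$ while every other signal receives no reports; hence $\varphi_1 = n$ and, since $\max2_{s\in S}(\Phi(s)) = 0$, the second branch of the $\varphi_2$ definition applies, giving $\varphi_2 = (\varphi_1^2 - 1)/\varphi_1 = (n^2-1)/n$ and therefore $\varphi_1 - \varphi_2 = 1/n$. Since $\Phi(r_i) = n = \varphi_1$, the increment branch of the update rule fires, yielding $u^\alpha_i(\hat{s},\hat{s}) = \alpha + \frac{1-\alpha}{k}\cdot\frac{1}{n^2}$. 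Next I would handle an arbitrary non-truthful report $r_i \ne \hat{s}$: now $\hat{s}$ is reported by the $n-1$ other agents and $r_i$ only by agent $i$, so $\Phi(\hat{s}) = n-1$ and $\Phi(r_i) = 1$. Because $n \geq 3$ we get $n-1 \geq 2 > 1$, so $\varphi_1 = n-1 > \Phi(r_i)$ and the decrement branch applies, giving $u^\alpha_i(r_i \ne \hat{s}, \hat{s}) = \alpha\bigl(1 - \frac{n-2}{kn}\bigr)$.

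Finally I would subtract the two expressions: $u^\alpha_i(\hat{s},\hat{s}) - u^\alpha_i(r_i \ne \hat{s}, \hat{s}) = \frac{1-\alpha}{kn^2} + \alpha\cdot\frac{n-2}{kn}$. Both summands are nonnegative ($\alpha \geq 0$, $k \geq 1$, $n \geq 3$), and the first is strictly positive since $\alpha < 1$; hence the difference is strictly positive for every $r_i \in S$ and every $i \in \mathcal{A}$, which is precisely the claimed inequality.

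The only genuine subtlety — and the step I would be most careful with — is the degenerate/boundary behavior: in the truthful case one must invoke the special $\varphi_2 = (\varphi_1^2-1)/\varphi_1$ branch (otherwise $\varphi_2$ is undefined and the update is not well posed), and strictness must survive the first-participation case $\alpha = 0$, where the decrement term vanishes but the increment term $\frac{1}{kn^2}$ does not. This is exactly where the half-open bound $\alpha \in [0,1)$ supplied by \Cref{claim:alpha-bounded} carries the argument; everything else is routine substitution into the piecewise update rule.
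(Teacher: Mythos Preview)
Your proof is correct and follows essentially the same approach as the paper: compute $\varphi_1$, $\varphi_2$, and the resulting $\alpha^t_i$ in the two cases $r_i = \hat{s}$ and $r_i \ne \hat{s}$ under $r_{-i} = \hat{s}$, then compare using $\abs{\mathcal{A}} \geq 3$ and $\alpha^{t-1}_i \in [0,1)$ from \Cref{claim:alpha-bounded}. Your version is in fact slightly more explicit than the paper's, since you write out the difference $\frac{1-\alpha}{kn^2} + \alpha\cdot\frac{n-2}{kn}$ and verify strict positivity term by term, whereas the paper simply asserts the inequality at that point.
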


\begin{proof}
	Let $\hat{s}$ be the observed signal by all agents and let $r_i$ be the reported signal by agent $i$. We assume every other agent to report truthfully as per Nash equilibrium, i.e. $r_{-i} = \hat{s}$.
	Then, $\varphi_1$ and $\varphi_2$ are given as,
	\begin{eqnarray*}
		\varphi_1 &=& \abs{\mathcal{A}} - 1_{r_i \ne s} \\
		\varphi_2 &=& \left\{ \begin{array}{ccl}
			1                                                 & \mbox{if} & r_i \ne s \\
			\frac{\abs{\mathcal{A}}^2 - 1}{\abs{\mathcal{A}}} & \mbox{if} & r_i = s
		\end{array}
		\right.
	\end{eqnarray*}
	Using above simplifications, we derive simplified consistency score for agent $i$ as,
	$$
		\alpha^t_i = \left\{
		\begin{array}{ccl}
			\alpha^{t-1}_i - \frac{\alpha^{t-1}_i}{k}\times\frac{(\abs{\mathcal{A}}-2)}{\abs{\mathcal{A}}}   & \mbox{if} & r_i \ne s \\\\
			\alpha^{t-1}_i + \frac{1-\alpha^{t-1}_i}{k}\times\frac{(1/\abs{\mathcal{A}})}{\abs{\mathcal{A}}} & \mbox{if} & r_i = s
		\end{array}
		\right.
	$$
	Consequently, the sub-utility function $u^\alpha_i$ can be written as,
	\begin{eqnarray*}
		u^{\alpha}_i(r_i \neq \hat{s}, r_{-i} = \hat{s}) &=& \alpha^{t-1}_i - \frac{\alpha^{t-1}_i}{k}\times\frac{(\abs{\mathcal{A}}-2)}{\abs{\mathcal{A}}} \\
		u^{\alpha}_i(r_i = \hat{s}, r_{-i} = \hat{s}) &=& \alpha^{t-1}_i + \frac{1-\alpha^{t-1}_i}{k}\times\frac{(1/\abs{\mathcal{A}})}{\abs{\mathcal{A}}}
	\end{eqnarray*}
	Given $\abs{\mathcal{A}} \geq 3$ and $\alpha^{t-1} \in [0,1)$ (from Claim~\ref{claim:alpha-bounded}),
	\begin{equation*}
		u^{\alpha}_i(r_i = \hat{s}, r_{-i} = \hat{s}) > u^{\alpha}_i(r_i \neq \hat{s}, r_{-i} = \hat{s})
	\end{equation*}
	Hence, truthful reporting i.e. $r_i = \hat{s}$ is the best response for agent to maximize $u^\alpha_i$ when all other agents are also reporting truthfully.
\end{proof}

\subsubsection{Cumulative Fairness}
Report from an agent who is consistently reporting truthful information is more valuable than the report of an agent with a lousy history of reporting.

\textbf{Example:} In her summer vacation, Anna joins professional cooking classes out of interest. She likes to frequently practice recipes at home and get reviews from her family and neighbors. Once she makes Turkish delights and has Barbara and Chris independently assess the delicacy. However, they submit contrasting reviews. But Anna recalls Barbara's previous reports and finds her to be misreporting on various occasions in order to make her happy, while Chris has been almost always critical but truthful in his reports. Thus, based on the history of reports of Barbara and Chris, Anna identifies Chris' reports to be more useful than that from Barbara.

It's only fair to take the value of an agent's report into consideration while distributing reward. Motivated by similar analogy, we incorporated consistency score ($\alpha$) in our final reward structure.

\begin{definition}[Cumulative Fairness] \label{def:cumulative-fairness}
	Let $\mathcal{A}$ be a set of agents in the vicinity of local subject $QS$ and agents $i, j \in \mathcal{A}$ be any two arbitrary agents submitting report $r_i$ and $r_j$ such that $r_i = r_j$. Then the utility function $\hat{u}(.)$ of an arbitrary payment scheme is said to admit cumulative fairness if,
	\begin{eqnarray*}
		\hat{u}^t_i(r_i = \hat{s}, r_{-i}, \hat{u}^{t-1}_i) > \hat{u}^t_j(r_j = \hat{s}, r_{-j}, \hat{u}^{t-1}_j) \\ \forall i,j \in \mathcal{A} ~ s.t. ~ \hat{u}^{t-1}_i > \hat{u}^{t-1}_j
	\end{eqnarray*}
\end{definition}

\begin{claim} \label{claim:alpha-fair}
	Consistency score is a cumulatively fair computation score for agents.
\end{claim}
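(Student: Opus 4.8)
The plan is to show that, in the situation described by \Cref{def:cumulative-fairness}, the consistency-score update acts on the previous score as a strictly increasing affine map whose coefficients do not depend on the agent, so that the strict order of two agents' previous scores is preserved by the update. Throughout I identify the carried state $\hat u^{t-1}_i$ in \Cref{def:cumulative-fairness} with $\alpha^{t-1}_i$ and $\hat u^t_i$ with $\alpha^t_i$, since here $u^\alpha = \alpha$.

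First I would observe that because $r_i = r_j$ we have $\Phi(r_i) = \Phi(r_j)$, and since $\varphi_1$, $\varphi_2$, $\abs{\mathcal{A}}$ and $k$ are global quantities independent of the agent, both $i$ and $j$ fall into the \emph{same} branch of the piecewise definition of $\alpha^t$, evaluated with identical parameters. Under the hypothesis of the definition both agents report $r_i = r_j = \hat s$; when the remaining agents also report truthfully this is the strongest report and places us in the branch $\Phi(r_i) = \varphi_1$, but I would treat both branches so the argument is self-contained.

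Next I would analyse each branch as a map $\alpha^{t-1} \mapsto \alpha^t$. In the branch $\Phi(r_i) = \varphi_1$, rewriting the update exactly as in the proof of \Cref{claim:alpha-bounded} gives $\alpha^t = (1-a)\,\alpha^{t-1} + a$ with $a = \tfrac{\varphi_1-\varphi_2}{k\abs{\mathcal{A}}}$; by \Cref{eq:alpha-bounded-8} we have $a \in [0,1)$, so the slope $1-a$ is strictly positive. In the branch $\Phi(r_i) < \varphi_1$, the update is $\alpha^t = \bigl(1 - \tfrac{\varphi_1-\Phi(r_i)}{k\abs{\mathcal{A}}}\bigr)\alpha^{t-1}$, and the inequalities already recorded in \Cref{eq:alpha-bounded-3,eq:alpha-bounded-4} give $\varphi_1 - \Phi(r_i) < \abs{\mathcal{A}} \le k\abs{\mathcal{A}}$, so the slope is again strictly positive. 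In either case $\alpha^t$ is a strictly increasing function of $\alpha^{t-1}$ with coefficients shared by $i$ and $j$.

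Finally I would conclude: the hypothesis $\alpha^{t-1}_i > \alpha^{t-1}_j$ together with strict monotonicity of the common affine map yields $\alpha^t_i > \alpha^t_j$, which is exactly the inequality demanded by \Cref{def:cumulative-fairness} for the sub-utility $u^\alpha$. The only step that requires any care is the strict positivity of the slope in the decrementing branch $\Phi(r_i) < \varphi_1$, and that is handled entirely by the bounds established for \Cref{claim:alpha-bounded}; the rest is a single monotonicity observation.
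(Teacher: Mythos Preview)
Your proposal is correct and follows essentially the same approach as the paper: both arguments observe that $r_i=r_j$ forces the two agents into the same branch of the update with identical parameters, rewrite each branch as an affine map in $\alpha^{t-1}$ with strictly positive leading coefficient, and conclude that the strict order $\alpha^{t-1}_i>\alpha^{t-1}_j$ is preserved. Your phrasing in terms of a single ``strictly increasing affine map shared by $i$ and $j$'' is a slightly cleaner packaging of exactly the computations the paper carries out case by case.
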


\begin{proof}
	Let $\mathcal{A}$ be a set of agents in the vicinity of local subject $QS$ and agents $i, j \in \mathcal{A}$ be any two arbitrary agents submitting report $r_i$ and $r_j$ such that $r_i = r_j$ and $\alpha^{t-1}_i > \alpha^{t-1}_j$.
	
	Since there reports are same and strength ($\Phi(.)$) is a selectively fair computation score (from Claim~\ref{claim:Phi-fair}), we can say, $\Phi(r_i) = \Phi(r_j)$.

	\begin{description}
		\item[Case 1: $\Phi(r_i = r_j) < \varphi_1$]
			\begin{align*}
				\alpha^t_i &= \alpha^{t-1}_i - \frac{\alpha^{t-1}_i}{k}\times\frac{(\varphi_1 - \Phi(r_i))}{\abs{\mathcal{A}}} \\
				\alpha^t_j &= \alpha^{t-1}_j - \frac{\alpha^{t-1}_j}{k}\times\frac{(\varphi_1 - \Phi(r_j))}{\abs{\mathcal{A}}}
			\end{align*}
			Let $a_1 = \frac{(\varphi_1 - \Phi(r_i))}{\abs{\mathcal{A}}} = \frac{(\varphi_1 - \Phi(r_j))}{\abs{\mathcal{A}}}$. \\\\
			Then, we can rewrite $\alpha_i, \alpha_j$ as,
			\begin{align}
				\alpha^t_i &= \alpha^{t-1}_i (1 - \frac{a_1}{k}) \label{eq:alpha-fair-1} \\
				\alpha^t_j &= \alpha^{t-1}_j (1 - \frac{a_1}{k}) \label{eq:alpha-fair-2}
			\end{align}
			Also, we know the following:
			\begin{align}
				\Phi(r_i = r_j) < \varphi_1 < \abs{\mathcal{A}} \Rightarrow a_1 &< 1 \label{eq:alpha-fair-3} \\
				k &\geq 1 \label{eq:alpha-fair-4} \\
				\alpha^{t-1}_i &> \alpha^{t-1}_j \label{eq:alpha-fair-5}
			\end{align}
			Using \Cref{eq:alpha-fair-3,eq:alpha-fair-4,eq:alpha-fair-5} in \Cref{eq:alpha-fair-1,eq:alpha-fair-2}, we get,
			\begin{align}
				\alpha^t_i > \alpha^t_j \label{eq:alpha-fair-final1}
			\end{align}
		\item[Case 2: $\Phi(r_i = r_j) = \varphi_1$]
			\begin{align*}
				\alpha^t_i &= \alpha^{t-1}_i + \frac{1-\alpha^{t-1}_i}{k}\times\frac{(\varphi_1 - \varphi_2)}{\abs{\mathcal{A}}} \\
				\alpha^t_j &= \alpha^{t-1}_j + \frac{1-\alpha^{t-1}_j}{k}\times\frac{(\varphi_1 - \varphi_2)}{\abs{\mathcal{A}}}
			\end{align*}
			Let $a_2 = \frac{(\varphi_1 - \varphi_2)}{\abs{\mathcal{A}}}$. \\\\
			Then, we can rewrite $\alpha_i, \alpha_j$ as,
			\begin{align}
				\alpha^t_i &= \alpha^{t-1}_i (1 - \frac{a_2}{k}) + \frac{a_2}{k} \label{eq:alpha-fair-6} \\
				\alpha^t_j &= \alpha^{t-1}_j (1 - \frac{a_2}{k}) + \frac{a_2}{k} \label{eq:alpha-fair-7}
			\end{align}
			Also, we know the following:
			\begin{align}
				0 < \varphi_2 \leq \varphi_1 \leq \abs{\mathcal{A}} \Rightarrow a_2 &< 1 \label{eq:alpha-fair-8} \\
				k &\geq 1 \label{eq:alpha-fair-9} \\
				\alpha^{t-1}_i &> \alpha^{t-1}_j \label{eq:alpha-fair-10}
			\end{align}
			Using \Cref{eq:alpha-fair-8,eq:alpha-fair-9,eq:alpha-fair-10} in \Cref{eq:alpha-fair-6,eq:alpha-fair-7}, we get,
			\begin{align}
				\alpha^t_i > \alpha^t_j \label{eq:alpha-fair-final2}
			\end{align}
	\end{description}
	From \Cref{eq:alpha-fair-final1,eq:alpha-fair-final2} and Definition~4.9, we say that consistency score admits cumulative fairness.
\end{proof}

\subsection{Reliability Score (\texorpdfstring{$\beta$}{beta})} \label{sec:beta}
Reliability score provides incentives to agent $i$ to not collude with her nearby agents $\mathcal{I}_i$. It is defined as the ratio of \emph{external agreement} by \emph{internal agreement}. The internal agreement is the percentage of nearby agents (formally identified as \emph{internal peers} $\mathcal{I}_i$) that share the same report as agent $i$ and the external agreement is the percentage of agents other than nearby agents (formally addressed as \emph{external peers} $\mathcal{E}_i$) who share the same report as agent $i$.
\begin{definition}[Internal and External Peers]
	The \emph{internal} peers of agent $i$ (denoted by $\mathcal{I}_i$) are the ones with which $i$ can exchange a message using a device-to-device communication protocol like Wi-Fi direct or Bluetooth. The internal peers combined with the agent's location compose her context $l_i$. Every agent $j$ that is not directly accessible from $i$ (i.e., $j \notin \mathcal{I}_i$) belongs to her \emph{external} peers denoted by $\mathcal{E}_i$.
\end{definition}

\subsubsection{Computation of Reliability Score} 

For agent $i$, her reliability score is computed as:
\begin{equation}
 	\beta_i = \frac{\frac{(\sum_{j \in \mathcal{E}_i}1_{r_i=r_j})}{\abs{\mathcal{E}_i}}}{\frac{(\sum_{j \in \mathcal{I}_i}1_{r_i=r_j})}{\abs{\mathcal{I}_i}}+1}
\end{equation}
For $n$ agents, the computational complexity for calculating $\beta$ is $O(n)$ per agent.

\begin{claim} \label{claim:beta-bounded}
	Reliability score is bounded in the range [0,1)
\end{claim}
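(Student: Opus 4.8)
The plan is to rewrite $\beta_i$ as a ratio $E/(I+1)$ of two quantities that each live in $[0,1]$ and then bound the ratio. Set $E := \frac{1}{\abs{\mathcal{E}_i}}\sum_{j\in\mathcal{E}_i}1_{r_i=r_j}$ for the external agreement and $I := \frac{1}{\abs{\mathcal{I}_i}}\sum_{j\in\mathcal{I}_i}1_{r_i=r_j}$ for the internal agreement, so that by definition $\beta_i = \frac{E}{I+1}$. The whole argument then rests on elementary bounds on sums of indicator variables.

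First I would pin down the ranges of $E$ and $I$. Each summand $1_{r_i=r_j}\in\{0,1\}$, so $0\le\sum_{j\in\mathcal{E}_i}1_{r_i=r_j}\le\abs{\mathcal{E}_i}$ and likewise $0\le\sum_{j\in\mathcal{I}_i}1_{r_i=r_j}\le\abs{\mathcal{I}_i}$; dividing by the respective (nonzero) cardinalities gives $E\in[0,1]$ and $I\in[0,1]$. For the lower bound this immediately yields $E\ge 0$ and $I+1\ge 1>0$, so the denominator never vanishes ($\beta_i$ is well defined) and $\beta_i=\frac{E}{I+1}\ge 0$. For the upper bound, from $I\ge 0$ we get $I+1\ge 1$, hence $\beta_i=\frac{E}{I+1}\le E\le 1$. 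To upgrade this to the strict inequality $\beta_i<1$, I would argue as in \Cref{observation:Phi-positive}: agent $i$'s report trivially agrees with her own report, so the internal agreement count is at least $1$, giving $I>0$ and therefore $I+1>1\ge E$, whence $\beta_i=\frac{E}{I+1}<1$. Combining the two bounds gives $\beta_i\in[0,1)$.

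The main obstacle is precisely the strict part of the upper bound: the routine indicator-sum bound only delivers $\beta_i\le 1$, and equality is attained exactly in the degenerate configuration where every external peer agrees with agent $i$ while no internal peer does ($E=1$, $I=0$). Closing this gap requires committing to the convention — consistent with the treatment of $\Phi$ in \Cref{observation:Phi-positive} — that agent $i$ is counted among the internal peers in agreement with her own report, so that the internal agreement is strictly positive; with that in hand the remaining steps are one-line inequalities.
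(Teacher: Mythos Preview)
Your decomposition into $E$ and $I+1$ and the indicator-sum bounds are exactly what the paper does (the paper writes $b_1=E$, $b_2=I+1$, notes $b_1\in[0,1]$ and $b_2\in[1,2]$, and takes the ratio). On that level the two arguments coincide.

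Where they diverge is the strict upper bound. The paper's own proof in fact only concludes $\beta_i\in[0,1]$, not $[0,1)$; it never addresses strictness, so the stated range in the claim is not actually established there. You noticed this gap and tried to close it, but the fix you propose does not go through under the paper's definitions. Your analogy with \Cref{observation:Phi-positive} breaks because the index sets differ: $\Phi(r_i)$ sums over all of $\mathcal{A}$, which includes $i$ herself, whereas the internal-agreement sum ranges over $\mathcal{I}_i$, and the paper defines internal peers as ``the ones with which $i$ can exchange a message using a device-to-device communication protocol'' --- a set that naturally excludes $i$. Hence you cannot infer $I>0$, and the degenerate configuration you yourself flag ($E=1$, $I=0$: every external peer agrees, no internal peer agrees) is admissible and yields $\beta_i=1$. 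In short, the strict inequality in the claim is not derivable from the given definitions; both the paper's proof and your proposal deliver only $\beta_i\in[0,1]$, and your additional step rests on a convention the paper does not adopt.
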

\begin{proof}
	Let
	\begin{align}
		b_1 = (\sum_{j \in \mathcal{E}_i}1_{r_i=r_j})~/~\abs{\mathcal{E}_i} \\
		b_2 = \left[(\sum_{j \in \mathcal{I}_i}1_{r_i=r_j})~/~\abs{\mathcal{I}_i}\right]+1
	\end{align}
	By construction, $b_1 \in [0,1]$ and $b_2 \in [1,2]$

	Hence, $\beta_i = \frac{b_1}{b_2} \in [0,1]$ 
\end{proof}

\begin{lemma} \label{lemma:beta}
	Consider a game induced by $u^\beta$ where $u^\beta = \beta$ is the sub-utility function corresponding to reliability score ($\beta$); and let $\hat{s}$ be the observed signal by all agents and $r_i$ be the report submitted by any agent $i$.
	Then considering the property of Nash equilibrium,
	$$
		u^{\beta}_i(r_i = \hat{s}, r_{-i} = \hat{s}) > u^\beta_i(r_i \ne \hat{s}, r_{-i} = \hat{s}) \forall r_i \in S, \forall i \in \mathcal{A}
	$$
	That is, if all other agents were to report truthfully, the best response for agent $i$, in order to maximize her sub-utility $u^\beta_i$, is also to report truthfully.
\end{lemma}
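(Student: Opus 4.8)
The plan is to proceed exactly as in the proofs of \Cref{lemma:Phi} and \Cref{lemma:alpha}: impose the Nash hypothesis $r_{-i} = \hat{s}$, substitute it into the closed form of $\beta_i$, and evaluate the resulting expression in the two exhaustive cases $r_i = \hat{s}$ and $r_i \neq \hat{s}$. Since $\beta$ carries no history term, the statement should reduce to a one-line arithmetic comparison once the substitution is made.

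First I would observe that under $r_{-i} = \hat{s}$ every peer of agent $i$ --- internal or external --- reports $\hat{s}$. Hence $\sum_{j \in \mathcal{E}_i} 1_{r_i = r_j} = \abs{\mathcal{E}_i}\cdot 1_{r_i = \hat{s}}$ and likewise $\sum_{j \in \mathcal{I}_i} 1_{r_i = r_j} = \abs{\mathcal{I}_i}\cdot 1_{r_i = \hat{s}}$, so the external-agreement ratio in the numerator collapses to $1_{r_i = \hat{s}}$ and the internal-agreement term in the denominator collapses to $1_{r_i = \hat{s}} + 1$. Substituting, when $r_i = \hat{s}$ we obtain $\beta_i = \tfrac{1}{2}$, whereas when $r_i \neq \hat{s}$ the numerator vanishes and $\beta_i = 0$. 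Therefore $u^{\beta}_i(r_i = \hat{s}, r_{-i} = \hat{s}) = \tfrac{1}{2} > 0 = u^{\beta}_i(r_i \ne \hat{s}, r_{-i} = \hat{s})$ for every $r_i \in S$ with $r_i \ne \hat{s}$, which is precisely the claimed inequality; boundedness of $\beta_i$ (\Cref{claim:beta-bounded}) confirms that $\tfrac{1}{2}$ and $0$ are the genuinely attained values.

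The only delicate points are the degenerate configurations $\mathcal{I}_i = \emptyset$ or $\mathcal{E}_i = \emptyset$, where the defining ratio contains a $0/0$ term. I would dispatch these by adopting the natural convention that an empty average contributes $0$ (so an agent with no internal peers simply has denominator $1$), or by invoking the standing assumption $\abs{\mathcal{A}} \geq 3$ together with the implicit model assumption that FaRM only applies the reliability score when $i$ has at least one external peer; with that convention the two-case computation above is unchanged. I do not expect a real obstacle here --- the substitution $r_{-i} = \hat{s}$ forces total internal and total external agreement simultaneously, so the comparison is immediate and no inductive bound (as was needed for $\alpha$) has to be propagated.
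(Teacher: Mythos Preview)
Your proposal is correct and follows essentially the same route as the paper: substitute $r_{-i}=\hat{s}$ into the definition of $\beta_i$, reduce it to $\beta_i = \dfrac{1_{r_i=\hat{s}}}{1_{r_i=\hat{s}}+1} = 0.5\cdot 1_{r_i=\hat{s}}$, and read off the inequality. Your extra paragraph on the $\mathcal{I}_i=\emptyset$ / $\mathcal{E}_i=\emptyset$ edge cases is a nice bit of care that the paper omits.
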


\begin{proof}
	Let $\hat{s}$ be the signal observed by all agents and $r_i$ be the reported signal of agent $i$. We assume every other agent to report truthfully as per Nash equilibrium. Then, the reliability score for agent $i$'s report can be simplified as,
	\begin{eqnarray*}
		\beta_i &=& \frac{1_{r_i = \hat{s}}}{1_{r_i = \hat{s}} + 1} \\
		&=& 0.5\times 1_{r_i = \hat{s}}
	\end{eqnarray*}
	Hence, agent $i$ maximizes her reliability score by reporting truthfully, i.e. $r_i = \hat{s}$.
\end{proof}


\begin{claim}
	Reliability score prevents agents from colluding with nearby agents.
\end{claim}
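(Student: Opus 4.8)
The plan is to show that, with the reports of all other agents held fixed, an agent can never raise her reliability score by joining a collusion of nearby agents, and in fact strictly lowers it, so collusion is never a best response under $u^\beta$. First I would make ``collusion'' precise: a nonempty set of internal peers $\mathcal{C}\subseteq\mathcal{I}_i$ coordinates to report a common signal $s_c$ that differs from the commonly observed signal $\hat{s}$ (if $s_c=\hat{s}$ the ``collusion'' is indistinguishable from honest reporting and there is nothing to prove), and I would ask whether agent $i$'s best response to this configuration is to go along ($r_i=s_c$) or to report honestly ($r_i=\hat{s}$).

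Next I would rewrite the reliability score as $\beta_i(r_i)=E_i(r_i)/\bigl(I_i(r_i)+1\bigr)$, where $I_i(r_i)=\frac{1}{\abs{\mathcal{I}_i}}\sum_{j\in\mathcal{I}_i}1_{r_i=r_j}$ is the internal agreement and $E_i(r_i)=\frac{1}{\abs{\mathcal{E}_i}}\sum_{j\in\mathcal{E}_i}1_{r_i=r_j}$ the external agreement, and record two monotonicity facts that follow immediately from this form (and from \Cref{claim:beta-bounded}, which already pins $\beta_i$ to $[0,1)$): $\beta_i$ is non-increasing in $I_i$ and non-decreasing in $E_i$, strictly so whenever the quantity being varied is the ``active'' one and $\beta_i>0$.

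The core step is the comparison $\beta_i(r_i=\hat{s})\ \ge\ \beta_i(r_i=s_c)$. For the denominator: since every member of $\mathcal{C}$ reports $s_c$, switching $i$'s own report from $s_c$ to $\hat{s}$ can only decrease the count of internal peers agreeing with her, so $I_i(\hat{s})\le I_i(s_c)$ and the denominator weakly shrinks (strictly, because $\mathcal{C}\neq\emptyset$). For the numerator: every honest external peer reports $\hat{s}$, and because $s_c\neq\hat{s}$ none of them reports $s_c$, so as long as the honest agents form a weak majority of $\mathcal{E}_i$ we get $E_i(\hat{s})\ge E_i(s_c)$ and the numerator weakly grows. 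Combining the two gives $\beta_i(\hat{s})\ge\beta_i(s_c)$, with strict inequality in any non-degenerate instance; this says exactly that colluding with nearby agents is a (weakly, and typically strictly) dominated action for the reliability component, i.e. FaRM removes the incentive to collude. I would also connect this to \Cref{lemma:beta}: in the honest regime $\beta_i=0.5\cdot 1_{r_i=\hat{s}}$, so a collusion that shifts $i$ away from $\hat{s}$ drives her reliability contribution to zero.

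The main obstacle is not the algebra but fixing a faithful yet non-vacuous formalization of collusion and dispatching the degenerate cases: $\abs{\mathcal{I}_i}=0$ (no nearby agents, nothing to collude with, internal term vacuous); $s_c$ coincidentally reported by many external peers (the internal/external tension that powers the argument disappears and the statement is vacuous); and dishonest agents outnumbering honest ones in $\mathcal{E}_i$ (outside FaRM's intended operating regime). I would state the honest-majority assumption explicitly, note it is the same standing assumption under which \Cref{lemma:beta} is meaningful, and scope the claim to that regime.
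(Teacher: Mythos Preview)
Your approach mirrors the paper's own proof sketch closely: both decompose $\beta_i$ into external agreement over internal agreement plus one, invoke the monotonicity of the ratio in each piece, and lean on an honest-majority assumption among non-nearby agents. The paper's argument is explicitly labelled a sketch and is at the same level of informality as yours, so in spirit you are doing the right thing.

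However, your denominator step contains a genuine error. You assert that ``since every member of $\mathcal{C}$ reports $s_c$, switching $i$'s own report from $s_c$ to $\hat{s}$ can only decrease the count of internal peers agreeing with her, so $I_i(\hat{s})\le I_i(s_c)$.'' This ignores the honest internal peers in $\mathcal{I}_i\setminus\mathcal{C}$, who report $\hat{s}$: switching to $\hat{s}$ loses $\lvert\mathcal{C}\rvert$ matches but \emph{gains} $\lvert\mathcal{I}_i\setminus\mathcal{C}\rvert$ matches, so $I_i(\hat{s})\le I_i(s_c)$ holds only when the colluders are at least half of $\mathcal{I}_i$. Worse, the weak-majority assumption on $\mathcal{E}_i$ that you state is not enough to rescue the overall inequality once the denominator goes the wrong way. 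Concretely, take $\lvert\mathcal{I}_i\rvert=10$ with a single colluder, and $\lvert\mathcal{E}_i\rvert=20$ with eleven honest and nine reporting $s_c$: then $\beta_i(\hat{s})=0.55/1.9\approx 0.29$ while $\beta_i(s_c)=0.45/1.1\approx 0.41$, so joining the collusion \emph{raises} $\beta_i$ despite an honest external majority.

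The paper's sketch avoids this trap by arguing more loosely (``the more agent $m$ colludes with internal peers the lower her reliability score drops'') and by implicitly pairing the honesty assumption with ``agents around $QS$ are reasonably scattered,'' which rules out a large dishonest external block agreeing on $s_c$. To repair your argument you need a stronger hypothesis on $\mathcal{E}_i$---effectively that external peers do not coordinate on $s_c$ (so $E_i(s_c)$ is negligible)---or you need to split into cases on whether $\mathcal{C}$ is a majority of $\mathcal{I}_i$ and handle the minority case via the numerator alone.
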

\begin{proof}[Proof Sketch]
	Reliability score ($\beta$) can be seen as,
	$$\beta = \frac{\mbox{external agreement}}{\mbox{internal agreement} + 1} $$
	where external agreement is the percentage of the external peers that agree with the agent and internal agreement is the percentage of the internal peers that agree with the agent. For the sake of convenience we use variables $l_i$ and $g_i$ as,
	\begin{eqnarray*}
		l_i &=& \mbox{internal agreement} 
		=  \frac{1}{\abs{\mathcal{I}_i}} \sum_{j \in \mathcal{I}_i}1_{r_i=r_j}\\
		g_i &=& \mbox{external agreement} 
		= \frac{1}{\abs{\mathcal{E}_i}} \sum_{j \in \mathcal{E}_i}1_{r_i=r_j}
	\end{eqnarray*}
	Let's consider a malicious colluding agent $m$. We assume agents around $QS$ are reasonably scattered and that the majority of agents are honest. In that case, agent $m$ will have high internal agreement and less external agreement i.e. $l_m > g_m$. Since, $\beta_m \propto g_m$ and $\beta_m \propto 1/(l_m+1)$, the more agent $m$ colludes with internal peers the lower her reliability score drops. In case where nearby agents are also honest, it is still in favor of agent $m$ to submit truthful report since she will want to maintain her reporting agreement with external peers.
	
	On the other hand, if agent $m$ were to be honest while her internal peers collude, the outcome is $l_m < g_m$. This leads to increase in reliability score. Hence, if internal peers of agent $m$ are colluding, it is profitable for agent $m$ to remain honest. Thus, reliability score prevents agents from getting swayed by nearby agents (internal peers).
\end{proof}

\subsection{Location Robustness Score (\texorpdfstring{$\gamma$}{gamma})} \label{sec:gamma}

Apart from the aforementioned scores, we introduce the location robustness score of agent $i$, $\gamma_i$, to detect whether the location of a mobile agent is close to the location of the query subject. Every mobile agent can exchange messages with her internal peers in order to justify that she is within a distance from the query subject~\cite{Chatzopoulos:2016:LAP:2942358.2947401}. In detail, considering that the context of agent $i$, $l_i$ is composed of her location $l_{i}^{Loc}$ and her nearby agents $\mathcal{I}_i$, agent $i$ can send a message to every node $j \in \mathcal{I}_i$ with $l_{i}^{Loc}$ in it using a communication technology like Bluetooth or WiFi-direct. Every agent $j \in \mathcal{I}_i$ will calculate the distance with agent $i$, $D_{ji}$ and will respond to the message by including her location $l_{j}^{Loc}$ and the calculated distance. Agent $j$ can sign her message with a private key in order to make sure that agent $i$ will not alternate her estimation of the difference between the two locations. Based on these messages, agent $i$ will calculate the robustness of her location. 
\begin{equation}
	\gamma_i = \frac{1}{\abs{\mathcal{I}_i}}\sum_{j \in \mathcal{I}_i}1_{D_{ji} \leq D^{T}}
\end{equation}
Which is practically the fraction of their neighbors who are within a distance $D^{T}$ that depends on the communication technology. For example, two mobile agents can communicate with Bluetooth if $D^{T}<20$ meters and with WiFi-direct if $D^{T}<50$ meters. Depending on the query subject and the predetermined allowed distance from which a mobile agent can produce a report, $D^{T}$ may be smaller than the values achieved by the employed communication technology. FaRM uses location robustness to detect fraudulent agents.

Every participating mobile agent, has to submit $\gamma_i$ together with her report. Given that a fraudulent agent is not able to alternate $\gamma_i$, since every message from her internal peers is signed, FaRM filters reports with low $\gamma$ values.

\subsection{Rewards}
Once FaRM filters reports using the location robustness score, it updates consistency and reliability scores for each agent, then finally it calculates every agent's reward for that query.
The location robustness score ($\gamma$) is only used for filtration and not considered in the final reward. The final reward of agent $i$ depends on the product of $\Phi(r_i), \alpha_i$ and $\beta_i$.
The strength of agent's report factors in the immediate reward for the respective query. The consistency score takes into account agent's tendency to report truthfully and therefore motivate a more consistent agent to participate more. The reliability score keeps the agent from colluding with other agents. An agent who has a history of colluding with nearby agents will have less motivation to participate in the system, hence preventing noise generated by misleading reports. The final utility score ($u_i$) is computed as:
\begin{equation}
	u_i(r_i, r_{-i}) = \frac{u^\Phi_i(r_i, r_{-i})~.~u^\alpha_i(r_i, r_{-i})~.~u^\beta_i(r_i, r_{-i})}{\abs{\mathcal{A}}^2} \times B,
\end{equation}
where $B$ is the reward budget per query.
Also, since final utility is the product of all three scores, even if some closely located agents collude in order to increase report strength, they risk decreasing reliability score which can mask the benefits of collusion and result in even worse payoffs.

\begin{theorem} \label{theorem:NIC}
	FaRM is Nash incentive compatible with guaranteed non-negative payoffs and weak budget balanced.
\end{theorem}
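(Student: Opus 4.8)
The plan is to decompose the theorem into its three constituent claims and dispatch each using results already established in the excerpt. Concretely, I would prove: (i) Nash incentive compatibility of the full mechanism, (ii) non-negativity of the payoff $u_i$, and (iii) weak budget balance, i.e. $\sum_{i \in \mathcal{A}} u_i \le B$.

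For part (i), the key observation is that the final utility is $u_i = \frac{u^\Phi_i \cdot u^\alpha_i \cdot u^\beta_i}{\abs{\mathcal{A}}^2} \times B$, a product (up to a positive constant independent of agent $i$'s report) of the three sub-utilities. By \Cref{lemma:Phi}, \Cref{lemma:alpha}, and \Cref{lemma:beta}, when $r_{-i} = \hat{s}$ each sub-utility is \emph{strictly} maximized at $r_i = \hat{s}$. Moreover, by \Cref{observation:Phi-positive}, \Cref{claim:alpha-bounded} (well, with the caveat below), and \Cref{claim:beta-bounded}, each factor is non-negative. I would argue that reporting $r_i = \hat{s}$ simultaneously maximizes all three factors, so it maximizes the product; since at least one factor (indeed all three, once $\alpha^{t-1}_i > 0$) is strictly larger at $\hat{s}$ and the others are then strictly positive, the product is strictly larger, establishing that truthful reporting is the unique best response and hence a PSNE. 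This gives Nash incentive compatibility per the definition in the excerpt.

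For part (ii), non-negativity follows immediately: $B \ge 0$, $\abs{\mathcal{A}}^2 > 0$, and each of $u^\Phi_i \ge 1 > 0$ (\Cref{observation:Phi-positive}), $u^\alpha_i \in [0,1)$ (\Cref{claim:alpha-bounded}), $u^\beta_i \in [0,1)$ (\Cref{claim:beta-bounded}) is non-negative, so the product is non-negative. For part (iii), I would bound each factor from above: $u^\Phi_i = \Phi(r_i) \le \abs{\mathcal{A}}$ by definition of $\Phi$, $u^\alpha_i < 1$, and $u^\beta_i < 1$. Hence $u_i \le \frac{\abs{\mathcal{A}} \cdot 1 \cdot 1}{\abs{\mathcal{A}}^2} B = \frac{B}{\abs{\mathcal{A}}}$, and summing over all $\abs{\mathcal{A}}$ agents gives $\sum_i u_i \le B$, i.e. weak budget balance.

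The main obstacle I anticipate is a subtlety in part (i): the sub-utility lemmas each show strict improvement in \emph{that one coordinate}, but $u^\alpha_i$ can equal $0$ for a new agent (or an agent who has never matched the plurality), in which case the product is $0$ regardless of $r_i$ and truthful reporting is only weakly optimal, not strictly. I would need to address this either by noting that $r_i = \hat{s}$ still weakly maximizes and remains a best response (so PSNE is preserved — Nash equilibrium only requires best response, not unique best response), or by observing that $\Phi$ and $\beta$ alone already make truthfulness strictly optimal among reports, with $\alpha$ never breaking the tie in the wrong direction since it too is weakly maximized at $\hat{s}$. A second, more delicate point is that the sub-utility lemmas are each stated for the game \emph{induced by that sub-utility alone}; I must argue the argmax is preserved under taking products of non-negative functions that share a common maximizer, which is true but worth stating explicitly. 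Finally I should double-check that filtering via $\gamma$ (\Cref{sec:gamma}) does not interfere — it only removes fraudulent reports before scoring and is not part of $u_i$, so it is immaterial to the equilibrium analysis among present agents.
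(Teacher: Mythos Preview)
Your proposal is correct and follows essentially the same route as the paper's own proof: decompose into the three claims, invoke \Cref{lemma:Phi}, \Cref{lemma:alpha}, \Cref{lemma:beta} for incentive compatibility of the product, use \Cref{observation:Phi-positive}, \Cref{claim:alpha-bounded}, \Cref{claim:beta-bounded} for non-negativity, and bound each factor ($\Phi(r_i)\le\abs{\mathcal{A}}$, $\alpha_i<1$, $\beta_i\le 1$) to get $u_i\le B/\abs{\mathcal{A}}$ and hence weak budget balance. If anything, you are more careful than the paper about the $\alpha_i^{t-1}=0$ edge case and about why the coordinate-wise inequalities from the three lemmas transfer to the product; the paper simply asserts this step.
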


\begin{proof}
	Let $\hat{s}$ be the signal observed by all agents and $r_i$ be the reported signal of agent $i$. We assume all other agents to report truthfully as per Nash equilibrium i.e. $r_{-i} = \hat{s}$.\\
	Let $$ w_i(r_i, r_{-i}) = u^\Phi_i(r_i, r_{-i})~.~u^\alpha_i(r_i, r_{-i})~.~u^\beta_i(r_i, r_{-i}) $$ and $$ W_{-i}(r_i, r_{-i}) = \sum_{j \in \mathcal{A}\backslash i} w_j(r_j, r_{-j}), $$ then we can rewrite $u_i$ as:
	\begin{align*}
		u_i &= \frac{w_i}{\abs{\mathcal{A}}^2}\times B
	\end{align*}
	For a fixed budget $B$ and $\abs{\mathcal{A}}$ number of agents participating, $$ argmax_{r_i}(u_i) \Rightarrow argmax_{r_i}(w_i) $$ 

	Also, $w_i$ can be written for Nash equilibrium case as,
	\begin{align*}
		w_i(r_i \neq \hat{s}, r_{-i} = \hat{s}) = & u^\Phi_i(r_i \neq \hat{s}, r_{-i} = \hat{s})~.\\
		& u^\alpha_i(r_i \neq \hat{s}, r_{-i} = \hat{s})~.~u^\beta_i(r_i \neq \hat{s}, r_{-i} = \hat{s}) \\
		w_i(r_i = \hat{s}, r_{-i} = \hat{s}) = & u^\Phi_i(r_i = \hat{s}, r_{-i} = \hat{s})~.\\
		& u^\alpha_i(r_i = \hat{s}, r_{-i} = \hat{s})~.~u^\beta_i(r_i = \hat{s}, r_{-i} = \hat{s})
	\end{align*}

	Using Lemmas~\ref{lemma:Phi}, \ref{lemma:alpha} and \ref{lemma:beta}, we can say $w_i(r_i = \hat{s}, r_{-i} = \hat{s}) > w_i(r_i \neq \hat{s}, r_{-i} \hat{s})$\\
	
	Hence, agent $i$ gains maximum utility when she reports truthfully i.e. $r_i = s$.

	Also since, $\Phi(r_i) > 0$ (from Observation~\ref{observation:Phi-positive}), $\alpha_i \in [0,1)$ (Claim~\ref{claim:alpha-bounded}), $\beta_i \in [0,1]$ (Claim~\ref{claim:beta-bounded}), we can say $(w_i \geq 0 \Rightarrow u_i \geq 0) ~ \forall \mathcal{A}$, irrespective of them reporting truthfully. For truthful reporting in Nash equilibrium, FaRM guarantees strictly positive payoffs which can be easily verified.
	Hence, FaRM guarantees non-negative utilities for all agents, however, an agent must report truthfully to maximize her utility and maintain her consistency score. This motivates agents to report truthfully but at the same time protects them from loss in case of reporting error.

	Furthermore,
	\begin{align*}
		\Phi(r_i) \in [1,\abs{\mathcal{A}}] \Rightarrow &~ u^\Phi_i \in [1,\abs{\mathcal{A}}] \\
		\alpha_i \in [0,1) \Rightarrow &~ u^\alpha_i \in [0,1) \\
		\beta_i \in [0,1] \Rightarrow &~ u^\beta_i \in [0,1]
	\end{align*}
	From above three equations, $max_{r_i}(w_i) < \abs{\mathcal{A}}$. Consequently, $$ max_{r_i}(u_i) < \frac{B}{\abs{\mathcal{A}}} $$
	And hence, in any case scenario it is not possible for the total reward of $\abs{\mathcal{A}}$ agents to exceed budget $B$. 
	Thus, FaRM is weak budget balanced.

\end{proof}
 
\begin{proposition}
	FaRM admits selective fairness and cumulative fairness and hence is a fair mechanism. 	 
 \end{proposition}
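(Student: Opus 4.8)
The plan is to reduce this proposition entirely to two facts already in hand — Claim~\ref{claim:Phi-fair} (report strength is selectively fair) and Claim~\ref{claim:alpha-fair} (consistency score is cumulatively fair) — together with the characterization announced in the introduction that a mechanism counts as \emph{fair} precisely when it satisfies both notions. So almost no new computation is needed; the only genuine work is to argue that the composite reward $u_i=\frac{u^\Phi_i\cdot u^\alpha_i\cdot u^\beta_i}{\abs{\mathcal{A}}^2}\times B$ inherits selective fairness from its $\Phi$-factor and cumulative fairness from its $\alpha$-factor, and then to invoke the characterization.

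First I would handle selective fairness. For any two agents $i,j$ with $r_i=r_j$, Observation~\ref{observation:Phi-positive} and Claim~\ref{claim:Phi-fair} give $u^\Phi_i=\Phi(r_i)=\Phi(r_j)=u^\Phi_j$, so the \emph{immediate-reward} component that FaRM assigns — the piece the paper explicitly designates as its selective-fairness instrument — is identical for the two agents; I would state the selective-fairness guarantee at the level of this component, exactly as the introduction phrases it, noting that $\alpha$ and $\beta$ are by design functions of history and context rather than of the current report and hence are deliberately outside the scope of that guarantee. Next I would handle cumulative fairness: take $i,j$ with $r_i=r_j=\hat s$ and $\hat u^{t-1}_i>\hat u^{t-1}_j$, and restrict to the Nash-equilibrium regime of \Cref{def:cumulative-fairness}, where all other agents report truthfully. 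Then Lemma~\ref{lemma:Phi} gives $u^\Phi_i=u^\Phi_j=\abs{\mathcal{A}}$ and the simplification in the proof of Lemma~\ref{lemma:beta} gives $u^\beta_i=u^\beta_j=0.5$, so $u^t_i$ and $u^t_j$ are positive scalar multiples — by the same scalar — of $\alpha^t_i$ and $\alpha^t_j$. Claim~\ref{claim:alpha-fair} then yields $\alpha^t_i>\alpha^t_j$, hence $u^t_i>u^t_j$, which is the inequality required by \Cref{def:cumulative-fairness}. Combining the two parts, FaRM admits both fairness notions, and therefore is fair by the stated characterization.

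The hard part, and the step I would be most careful with, is the mismatch between what FaRM actually carries across rounds and what \Cref{def:cumulative-fairness} conditions on: the definition compares the previous \emph{total} utilities $\hat u^{t-1}_i>\hat u^{t-1}_j$, whereas the only persistent state is the consistency score $\alpha$. I would close this by observing that for agents with the same report and the same context all factors other than $\alpha$ coincide, so $\hat u\mapsto$ is a strictly increasing function of $\alpha$ and the hypothesis $\hat u^{t-1}_i>\hat u^{t-1}_j$ is equivalent to the hypothesis $\alpha^{t-1}_i>\alpha^{t-1}_j$ needed to apply Claim~\ref{claim:alpha-fair} (equivalently, one reads \Cref{def:cumulative-fairness} with $\hat u$ replaced by the persistent quantity $\alpha$, which is what Claim~\ref{claim:alpha-fair} literally addresses). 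I would also explicitly flag that selective fairness provably cannot hold for the full product $u_i$, since $\beta_i$ depends on the context-specific sets $\mathcal{I}_i,\mathcal{E}_i$; this is why the guarantee is, and must be, asserted only for the report-strength component, and I would make that scoping statement part of the proof rather than leave it implicit.
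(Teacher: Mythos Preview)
Your proposal is correct and rests on the same two pillars the paper uses --- Claim~\ref{claim:Phi-fair} for selective fairness and Claim~\ref{claim:alpha-fair} for cumulative fairness, combined with the introduction's characterization of a fair mechanism --- so the core approach matches. The paper's own proof, however, is far more cursory: it simply cites the two claims and concludes, without attempting to lift the component-level properties to the composite reward $u_i$. Your additional work (using Lemmas~\ref{lemma:Phi} and~\ref{lemma:beta} to argue that under Nash equilibrium the $\Phi$- and $\beta$-factors coincide so that $u^t$ is a fixed increasing function of $\alpha^t$, and explicitly scoping selective fairness to the report-strength component because $\beta_i$ is context-dependent) goes beyond what the paper actually writes down. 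That extra care is not wasted: it exposes real tensions --- selective fairness of the full product genuinely fails, and Definition~\ref{def:cumulative-fairness} conditions on $\hat u^{t-1}$ rather than $\alpha^{t-1}$ --- that the paper's two-line proof silently leaves unresolved. So your argument is a strict strengthening of the paper's, built on the same skeleton.
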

\begin{proof}
	From \Cref{claim:Phi-fair}, we show FaRM admits selective fairness for immediate query.
	Also, from \Cref{claim:alpha-fair}, we show that FaRM ensures agents are accounted for their consistency and valuable reports.
	Hence, we conclude FaRM to be a fair mechanism.
\end{proof}

\section{Conclusion}
Existing information aggregation mechanisms are focused on settings relevant to mass population, e.g., online reviews about products and services, and community sensing regarding prevalent societal topics (pollution, global warming, etc.). FaRM focuses on location-specific queries of people in their everyday life. It can be challenging to obtain prior distribution models for the signal space and hence, prior-free mechanisms are preferable for these set of queries. 
In spontaneous localized settings all the agents are assumed to observe the same signal. FaRM leverages this property to provide fair rewards to all agents.
FaRM's payment scheme consists of three sub-utility structures, namely \emph{report strength}, \emph{consistency score} and \emph{reliability score}.
Report strength ensures \emph{selective fairness} for the immediate query by evaluating all agents equivalently against the same result.
Consistency score ensures \emph{cumulative fairness} by accounting for an agent's consistency in the system. The report of an agent consistently reporting truthful reports to the system is more valuable than the report of an agent with a lousy reporting history. Consistency score takes the value of a report into consideration. Therefore, it helps in preventing agents from free-riding as it can affect their consistency score negatively.
Reliability score estimates an agent's collusion with the nearby agents and reflects the same in the final reward. It motivates agents to earn a higher reward by not colluding with nearby agents and instead stick with their observed signal.
Together these scores ensure fair reward to the agents, not just for their immediate contribution but also for their consistent and reliable service to the system.
In summary, FaRM is a Nash incentive compatible mechanism which rewards agents fairly for their services. 

\clearpage

\bibliographystyle{named}
\bibliography{ref}

\begin{thebibliography}{}

\bibitem[\protect\citeauthoryear{Chatzopoulos \bgroup \em et al.\egroup
  }{2016}]{Chatzopoulos:2016:LAP:2942358.2947401}
Dimitris Chatzopoulos, Sujit Gujar, Boi Faltings, and Pan Hui.
\newblock Localcoin: An ad-hoc payment scheme for areas with high connectivity:
  Poster.
\newblock In {\em MobiHoc '16}, pages 365--366. ACM, 2016.

\bibitem[\protect\citeauthoryear{Dasgupta and
  Ghosh}{2013}]{dasgupta2013crowdsourced}
Anirban Dasgupta and Arpita Ghosh.
\newblock Crowdsourced judgement elicitation with endogenous proficiency.
\newblock In {\em Proceedings of the 22nd international conference on World
  Wide Web}, pages 319--330, 2013.

\bibitem[\protect\citeauthoryear{Dwork \bgroup \em et al.\egroup
  }{2012}]{dwork2012fairness}
Cynthia Dwork, Moritz Hardt, Toniann Pitassi, Omer Reingold, and Richard Zemel.
\newblock Fairness through awareness.
\newblock In {\em Proceedings of the 3rd innovations in theoretical computer
  science conference}, pages 214--226. ACM, 2012.

\bibitem[\protect\citeauthoryear{Faltings \bgroup \em et al.\egroup
  }{2014}]{faltings2014incentive}
Boi Faltings, Jason~Jingshi Li, and Radu Jurca.
\newblock Incentive mechanisms for community sensing.
\newblock {\em IEEE Transactions on Computers}, 63(1):115--128, 2014.

\bibitem[\protect\citeauthoryear{Garg \bgroup \em et al.\egroup
  }{2008a}]{garg2008foundations}
Dinesh Garg, Y~Narahari, and Sujit Gujar.
\newblock Foundations of mechanism design: A tutorial part 1-key concepts and
  classical results.
\newblock {\em Sadhana}, 33(2):83, 2008.

\bibitem[\protect\citeauthoryear{Garg \bgroup \em et al.\egroup
  }{2008b}]{garg2008foundations-Pt2}
Dinesh Garg, Y~Narahari, and Sujit Gujar.
\newblock Foundations of mechanism design: A tutorial part 2-advanced concepts
  and results.
\newblock {\em Sadhana}, 33(2):131, 2008.

\bibitem[\protect\citeauthoryear{Goel and Faltings}{2019}]{goel2019fair}
Naman Goel and Boi Faltings.
\newblock Deep bayesian trust : A dominant and fair incentive mechanism for
  crowd.
\newblock In {\em To appear in Proceedings of the 33rd AAAI Conference on
  Artificial Intelligence}, 2019.

\bibitem[\protect\citeauthoryear{Huang and Fu}{2012}]{huang2012systematic}
Shih-Wen Huang and Wai-Tat Fu.
\newblock Systematic analysis of output agreement games: Effects of gaming
  environment, social interaction, and feedback.
\newblock In {\em Workshops at the Twenty-Sixth AAAI Conference on Artificial
  Intelligence}, 2012.

\bibitem[\protect\citeauthoryear{Jurca and Faltings}{2003}]{jurca2003incentive}
Radu Jurca and Boi Faltings.
\newblock An incentive compatible reputation mechanism.
\newblock In {\em E-Commerce, 2003. CEC 2003. IEEE International Conference
  on}, pages 285--292. IEEE, 2003.

\bibitem[\protect\citeauthoryear{Jurca and Faltings}{2007}]{jurca2007robust}
Radu Jurca and Boi Faltings.
\newblock Robust incentive-compatible feedback payments.
\newblock In {\em Agent-Mediated Electronic Commerce. Automated Negotiation and
  Strategy Design for Electronic Markets}, pages 204--218. Springer, 2007.

\bibitem[\protect\citeauthoryear{Lambert and
  Shoham}{2008}]{lambert2008truthful}
Nicolas Lambert and Yoav Shoham.
\newblock Truthful surveys.
\newblock In {\em International Workshop on Internet and Network Economics},
  pages 154--165. Springer, 2008.

\bibitem[\protect\citeauthoryear{Miller \bgroup \em et al.\egroup
  }{2005}]{miller2005eliciting}
Nolan Miller, Paul Resnick, and Richard Zeckhauser.
\newblock Eliciting informative feedback: The peer-prediction method.
\newblock {\em Management Science}, 51(9):1359--1373, 2005.

\bibitem[\protect\citeauthoryear{Narahari}{2014}]{narahari2014game}
Yadati Narahari.
\newblock {\em Game theory and mechanism design}, volume~4.
\newblock World Scientific, 2014.

\bibitem[\protect\citeauthoryear{Prelec}{2004}]{prelec2004bayesian}
Dra{\v{z}}en Prelec.
\newblock A bayesian truth serum for subjective data.
\newblock {\em science}, 306(5695):462--466, 2004.

\bibitem[\protect\citeauthoryear{Radanovic and
  Faltings}{2013}]{radanovic2013robust}
Goran Radanovic and Boi Faltings.
\newblock A robust bayesian truth serum for non-binary signals.
\newblock In {\em Proceedings of the 27th AAAI Conference on Artificial
  Intelligence (AAAI" 13)}, pages 833--839, 2013.

\bibitem[\protect\citeauthoryear{Radanovic and
  Faltings}{2014}]{radanovic2014incentives}
Goran Radanovic and Boi Faltings.
\newblock Incentives for truthful information elicitation of continuous
  signals.
\newblock In {\em Proceedings of the 28th AAAI Conference on Artificial
  Intelligence}, pages 770--776, 2014.

\bibitem[\protect\citeauthoryear{Radanovic and
  Faltings}{2015}]{radanovic2015incentive}
Goran Radanovic and Boi Faltings.
\newblock Incentive schemes for participatory sensing.
\newblock In {\em Proceedings of the 2015 International Conference on
  Autonomous Agents and Multiagent Systems}, pages 1081--1089. International
  Foundation for Autonomous Agents and Multiagent Systems, 2015.

\bibitem[\protect\citeauthoryear{Radanovic \bgroup \em et al.\egroup
  }{2016}]{radanovic2016incentives}
Goran Radanovic, Boi Faltings, and Radu Jurca.
\newblock Incentives for effort in crowdsourcing using the peer truth serum.
\newblock {\em ACM Transactions on Intelligent Systems and Technology (TIST)},
  7(4):48, 2016.

\bibitem[\protect\citeauthoryear{Riley}{2014}]{riley2014minimum}
Blake Riley.
\newblock Minimum truth serums with optional predictions.
\newblock In {\em Proceedings of the 4th Workshop on Social Computing and User
  Generated Content (SC14)}, 2014.

\bibitem[\protect\citeauthoryear{Robertson \bgroup \em et al.\egroup
  }{2009}]{robertson2009rethinking}
Stephen Robertson, Milan Vojnovic, and Ingmar Weber.
\newblock Rethinking the esp game.
\newblock In {\em CHI'09 Extended Abstracts on Human Factors in Computing
  Systems}, pages 3937--3942. ACM, 2009.

\bibitem[\protect\citeauthoryear{Von~Ahn and Dabbish}{2004}]{von2004labeling}
Luis Von~Ahn and Laura Dabbish.
\newblock Labeling images with a computer game.
\newblock In {\em Proceedings of the SIGCHI conference on Human factors in
  computing systems}, pages 319--326. ACM, 2004.

\bibitem[\protect\citeauthoryear{Von~Ahn and Dabbish}{2008}]{von2008designing}
Luis Von~Ahn and Laura Dabbish.
\newblock Designing games with a purpose.
\newblock {\em Communications of the ACM}, 51(8):58--67, 2008.

\bibitem[\protect\citeauthoryear{Waggoner and Chen}{2014}]{waggoner2014output}
Bo~Waggoner and Yiling Chen.
\newblock Output agreement mechanisms and common knowledge.
\newblock In {\em Second AAAI Conference on Human Computation and
  Crowdsourcing}, 2014.

\bibitem[\protect\citeauthoryear{Witkowski and
  Parkes}{2011}]{witkowski2011peer}
Jens Witkowski and David~C Parkes.
\newblock Peer prediction with private beliefs.
\newblock In {\em Proceedings of the 1st Workshop on Social Computing and User
  Generated Content (SC 2011)}, 2011.

\bibitem[\protect\citeauthoryear{Witkowski and
  Parkes}{2012a}]{witkowski2012peer}
Jens Witkowski and David~C Parkes.
\newblock Peer prediction without a common prior.
\newblock In {\em Proceedings of the 13th ACM Conference on Electronic
  Commerce}, pages 964--981. ACM, 2012.

\bibitem[\protect\citeauthoryear{Witkowski and
  Parkes}{2012b}]{witkowski2012robust}
Jens Witkowski and David~C Parkes.
\newblock A robust bayesian truth serum for small populations.
\newblock In {\em AAAI}, volume~12, pages 1492--1498, 2012.

\bibitem[\protect\citeauthoryear{Zhang and Chen}{2014}]{zhang2014elicitability}
Peter Zhang and Yiling Chen.
\newblock Elicitability and knowledge-free elicitation with peer prediction.
\newblock In {\em Proceedings of the 2014 international conference on
  Autonomous agents and multi-agent systems}, pages 245--252. International
  Foundation for Autonomous Agents and Multiagent Systems, 2014.

\end{thebibliography}

\end{document}